\theoremstyle{plain}
\newtheorem{theorem}{Theorem}
\newtheorem{corollary}{Corollary}
\theoremstyle{definition}
\newtheorem{definition}{Definition}
\newtheorem{property}{Proposition} 
\newtheorem{lemma}{Lemma}
\theoremstyle{plain}
\newtheorem{theorem}{Theorem}
\theoremstyle{definition}
\newtheorem{definition}{Definition}
\newtheorem{lemma}{Lemma}
\newif\ifNotUse  
 \newif\ifNotUse  
\begin{document}

\title{SRLA: A real time sliding time window super point cardinality estimation algorithm for high speed network based on GPU}
\author[seu_cs]{Jie Xu\corref{cor1}}
\ead{xujieip@163.com}
\author [seu_ns] {Wei Ding}
\ead{wding@carnation.njnet.edu.cn}
\author [seu_ns] {Jian Gong}
\ead{jgong@carnation.njnet.edu.cn}
\author [seu_ns] {Xiaoyan Hu}
\ead{xyhu@carnation.njnet.edu.cn}

\cortext[cor1]{Corresponding author}
\address[seu_cs]{School of Computer Science and Engineering, South East University, Nanjing, China}
\address[seu_ns]{School of Cyber Science and Engineering, South East University, Nanjing, China}

\begin{abstract}
Super point is a special host in network which communicates with lots of other hosts in a certain time period. The number of hosts contacting with a super point is called as its cardinality. Cardinality estimating plays important roles in network management and security. All of existing works focus on how to estimate super point's cardinality under discrete time window. But discrete time window causes great delay and the accuracy of estimating result is subject to the starting of the window. sliding time window, moving forwarding a small slice every time, offers a more accuracy and timely scale to monitor super point's cardinality. On the other hand, super point's cardinality estimating under sliding time window is more difficult because it requires an algorithm can record the cardinality incrementally and report them immediately at the end of the sliding duration. This paper firstly solves this problem by devising a sliding time window available algorithm SRLA. SRLA consists of two cardinality estimating algorithms, sliding rough estimator SRE and sliding linear estimator SLE. SRE is used to detect super point while scanning packets and it generates a candidate super point list at the end of a sliding time window. With this candidate super point list, SLE estimates the cardinality of every candidate super point fast and accurately. SRLA's ability of working under sliding time window comes from a novel cardinality recorder, distance recorder DR. DR records the time a host appearing and helps SRE and SLE to judge if a host contacting with a super point is in a certain time period. SRLA could run parallel to deal with high speed network in line speed. This paper also gives the way to deploy SRLA on a common GPU. Experiments on real world traffics which have 40 GB/s bandwidth show that SRLA estimates super point's cardinality within 100 milliseconds under sliding time window when running on a Nvidia GPU GTX650 with 1 GB memory. The estimating time of SRLA is much smaller than that of other algorithms which consumes more than 2000 milliseconds under discrete time window. 
\end{abstract} 

\maketitle

\begin{keyword}
cardinality estimation \sep super point detection \sep sliding time window \sep GPU parallel computing
\end{keyword}

\section{Introduction}

Super point cardinality estimation has been researched for a long time because of its importance\cite{JNCA2015_BotFlexACommunityDrivenToolForBotnetDetection}\cite{JNCA2016:BotnetDetectionViaMiningTrafficFlowCharacteristics}\cite{JNCA2017:AValidationModelForNonlexicalRoutingProtocols}. And many excellent algorithms have been proposed recent years\cite{HSD:AbitmapBasedAlgorithmDetectingStealthySuperpoints}\cite{HSD:2014ANewSketchMethodMeasuringHostConnectionDegreeDistribution} . But these algorithms only work for discrete time window, under which there is no duplicating time period between two adjacent windows. These algorithms will reinitialize at the beginning of every window and discard hosts' cardinality information of previous time\cite{HSD:identifyHighCardinalityHosts}. The discrete time window splits host cardinality into discrete pieces and doesn't report super point cardinality until the end of a window which has a latency of the size of time window. Sliding time window which moves a small unit smoothly has a better measurement result than discrete time window. It stores and updates host cardinality information incrementally. Sliding time window estimates super point cardinality more precisely because it is not affected by the starting of window. And sliding time window reports super point more timely for the sake that the moving step is much smaller than the size of discrete time window and at the end of each moving step, super point cardinality will be estimated immediately. But super point detection and cardinality estimation under sliding time window is more complex than that under discrete time window because it maintains hosts state of some previous time and estimates super point's cardinality more frequently.

Super point's cardinality estimation could be divided into three procedures: packets scanning, super point detection and cardinality estimation. The first procedure scans every packet and records necessary information about hosts in different networks. Procedures 2 and 3 detect super points and estimate their cardinalities according to the recording information. They always run together at the end of a time window which is what all of existing algorithms do\cite{HSD:streamingAlgorithmFastDetectionSuperspreaders}\cite{HSD:findFrequentItemsInStream}. But the merging of super point detection and cardinality estimation will consume lots of time at the end of a time window because restoring super points from huge hosts is a complex procedure. Under discrete time window, the end window procedure time will not cause many influence because two adjacent discrete time windows have no duplicate time period and duration between their end points is equal to the length of them. But under sliding time window, the duration of two windows' end points is only a small part of the window size and super point's cardinality will be estimated more frequently than that under discrete time window. In order to estimate super point's cardinality under sliding time window in real time, the end window procedure time must be much smaller than the sliding step. If we try to detect super point while scanning packets in the time window and only estimate the cardinalities of these detected candidate super points, the procedure time at the end of the time window will be reduced greatly.  

Millions of packets passing through a high speed network every second\cite{JNCA2018:MATEMAUnifiedFrameworkTustMCDMAssuringSecurityReliabilityQosDTNRouting}\cite{JNCA2018:ZeroqueueEthernetCongestionControlProtocolBasedOnAvailableBandwidthEstimation}. So the super point detection algorithm running with packets scanning must be light weight: small memory requirement, fast processing speed. This light weight detection algorithm generates a candidate super points list while scanning packets. At the end of a time window, a more accurate algorithm will be used to estimate every candidate super point's cardinality fast. This paper devises two estimators: sliding rough estimator $SRE$ and sliding long estimator $SLE$. $SRE$ is a light weigh estimator which judges if a host is super point while scanning packets. And $SLE$ calculates the cardinality of a given host. Based on these two estimators, a novel sliding super point's cardinality estimating algorithm $SRLA$ is proposed. In order to work under sliding time window, $SRLA$ uses a new method called as distance recorder $DR$ to record the appearance of a host. $DR$ helps $SRLA$ to judge if a host appears in a certain sliding time window by updating itself incrementally.  

Nowadays network bandwidth is becoming higher and higher\cite{JNCA2018:SmartResourceAllocationForImprovingQoEIPMultimediaSubsystems}\cite{JNCA2016:SoftwareDefinedNetworksASurvey}. To estimate super point's cardinality from the high speed network in real time, parallel processing technology is necessary. Most of the previous algorithms tried to accelerate the packets processing speed by used fast memory SRAM. But the small size SRAM limits the accuracy of these algorithms in a high-speed network. What's more, estimation algorithm requires lots of computation operations and the computation ability of CPU is also the bottleneck. Parallel computation ability of GPU (Graphic Processing Unit) is stronger than that of CPU because of its plenty operating cores. When using GPU to scan packets parallel, a high throughput will be acquired.

Motivated by these ideas, this paper firstly proposed a sliding time window available super point cardinality estimation algorithm $SRLA$. The main contribution of this paper is listed below.

\begin{enumerate}
\item Devise a novel light weight method to judge if a host is a super point under sliding time window.

\item Firstly propose a super point detection and cardinality estimation algorithm under sliding time window. 

\item Deploy the sliding super point detection and cardinality estimation algorithm on a common GPU to deal with core network in real time.

\end{enumerate}

In the next section, we will introduce previous super point detection algorithm under discrete time window and analyze their merit and weakness. In section 3, two sliding time window available cardinality estimators, sliding rough estimator and sliding linear estimator, are proposed. Section 4 introduces the novel algorithm $SRLA$ and describes how it does to detect super points and estimate the cardinality under sliding time window. In this section, a method to deploy $SRLA$ on GPU is also proposed. Section 5 shows experiments of real world 40Gb/s core network traffic. And we make a conclusion in the last section.

\section{Related work}
Super point detection is a hot topic in network research field. Shobha et al.\cite{HSD:streamingAlgorithmFastDetectionSuperspreaders} proposed an algorithm that did not keep the state of every host so this algorithm can scale very well. Cao et al.\cite{HSD:identifyHighCardinalityHosts} used a pair-based sampling method to eliminate the majority of low opposite number hosts and reserved more resource to estimate the opposite number of the resting hosts. Estan et al.\cite{HSD:bitmapCountingActiveFlowsHighSpeedLinks} proposed two bits map algorithms based on sampling flows. Several hosts could share a bit of this map to reduce memory consumption. All of these methods were based on sampling flows which limited its accuracy. 

Wang et al.\cite{HSD:ADataStreamingMethodMonitorHostConnectionDegreeHighSpeed} devised a novel structure, called double connection degree sketch (DCDS), to store and estimate different hosts cardinalities. They updated DCDS by setting several bits simply. In order to restore super points at the end of a time period, which bits to be updated were determined by Chinese Remainder Theory(CRT) when scanning packets. By using CRT, every bit of DCDS could be shared by different hosts. But the computing process of CRT was very complex which limited the speed of this algorithm.

Liu et al.\cite{HSD:DetectionSuperpointsVectorBloomFilter} proposed a simple method to restore super hosts basing on bloom filter. They called this algorithm as Vector Bloom Filter(VBF). VBF used the bits extracted from a IP address to decide which bits to be updated when scanning packets. Compared with CRT, bit extraction only needed a small operation.  But VBF would consume much time to restore super point when the number of super points was very big because it used only four bit arrays to record cardinalities.

Most of the previous works only focused on accelerating speed by adopting fast memory but they neglected the calculation ability of processors. Seon-Ho et al.\cite{HSD:GPU:2014:AGrandSpreadEstimatorUsingGPU} first used GPU to estimate hosts opposite numbers. They devised a Collision-tolerant hash table to filter flows from origin traffic and used a bitmap data structure to record and estimate hosts' opposite numbers. But this method needed to store IP address of every flow while scanning traffic because they could not restore super points from the bitmap directly. Additional candidate IP address storing space increased the memory requirement of this algorithm.

All of these algorithms can't not work under sliding time window because they must reinitialize their data structures at the beginning of every window. And they consume too much time when estimating cardinalities which is longer than the step of window sliding. In the following part, an incrementally updating and fast estimating algorithm is proposed to detect super point and estimate their cardinalities under sliding time window.
\section{Sliding cardinality estimation}
There are huge hosts in a high speed network. But super point takes up a little proportion. Estimating super point's cardinality under sliding time window contains two parts: detecting super points under sliding time window, estimating their cardinalities. Both of these parts have a same issue, how to estimate a host's cardinality under sliding time window. Two sliding estimators, sliding rough estimator (SRE) and sliding linear estimator (SLE), are introduced in this section for super point detection and cardinality estimation separately. SRE judges if a host is super point with small memory and SLE gives the cardinality estimation for a certain host. For verbal clarity, we firstly give the definition of sliding time window.

\subsection{sliding time window definition}
Suppose there are two networks $A$ and $B$. These two networks are contacting with each other through an edge router $ER$. $A$ might be a city-wide network or even a country-wide network. And $B$ might be another city-wide network or the Internet. All traffic between $A$ and $B$ could be observed from $ER$. Split this traffic by successive time slices as shown in figure \ref{SlidingWindowModal}.

\begin{figure}[!ht]
\centering
\includegraphics[width=0.47\textwidth]{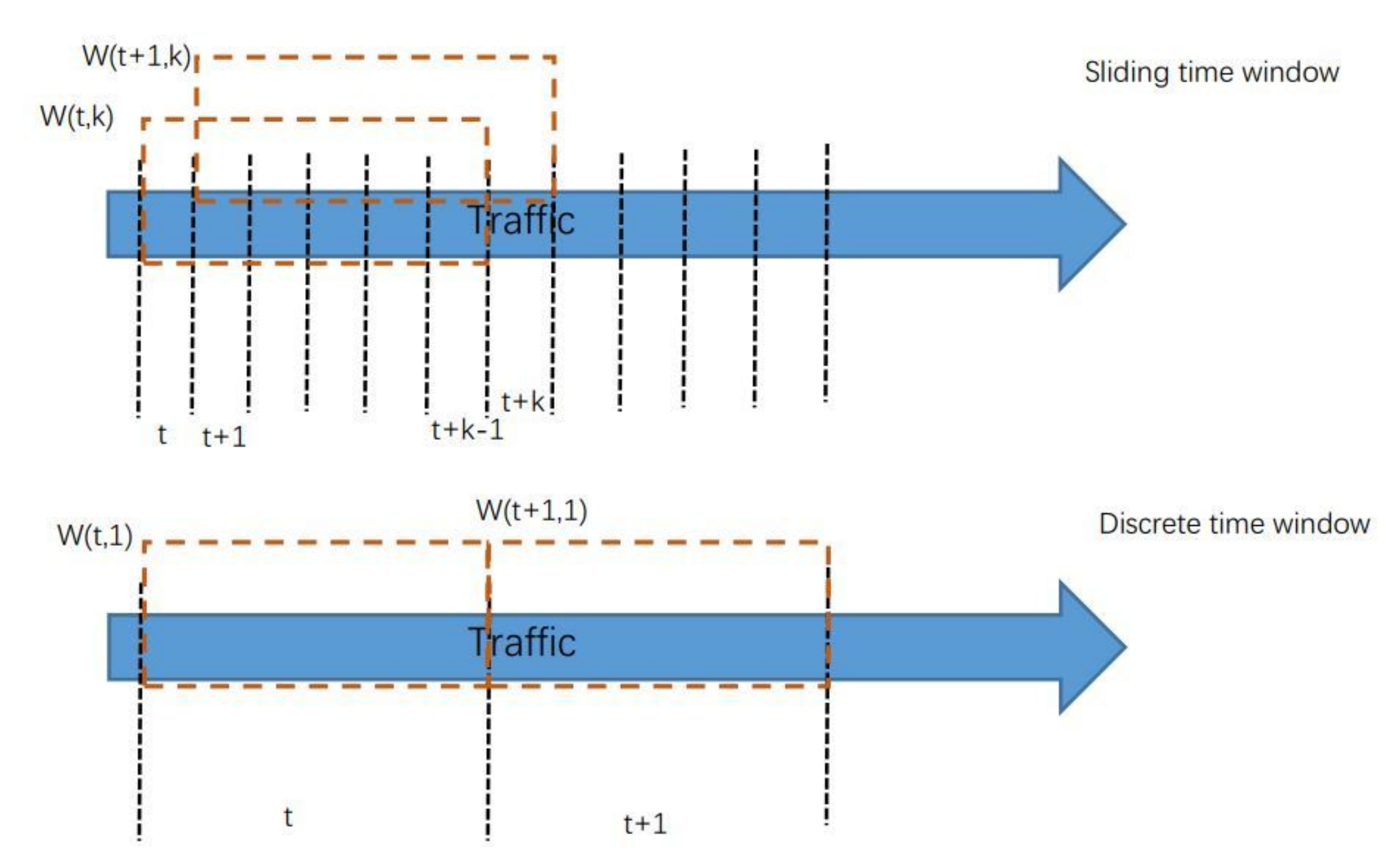}
\caption{Sliding time window and discrete time window}
\label{SlidingWindowModal}
\end{figure}

These time slices have the same duration. The length of a time slice could be 1 second, 1 minute or any period in different situations. Every time slice is identified by a number. A sliding time window $W(t,k)$ contains $k$ successive slices starting from the $t$ time slice as shown in the top part of figure \ref{SlidingWindowModal}. Sliding time window will move forward one slice once a time. So two adjacent sliding time windows contain $k-1$ same slices. When $k$ is set to 1, there is no duplicate time period between two adjacent windows, which is the case of discrete time window in the bottom part of figure \ref{SlidingWindowModal}. 

Let $A$ be the network from which we want to detect super points. A host's packets stream in a sliding time window is defined as below.

\begin{definition}[Packets stream of a host]
\label{def-slidingPktStream}
For a host $aip \in A$, every packet passing through $ER$ in sliding time window $W(t,k)$ which has $aip$ as source or destination address composes packets stream of $aip$, written as $Pkt(aip,t,k)$.
\end{definition}

$aip$'s opposite hosts stream $ST(aip,t,k)$ could be derived from $Pkt(aip,t,k)$ by extracting the other IP address except $aip$. A IP address $bip$ may appear several times in $ST(aip,t,k)$ because $aip$ can send several packets to $bip$ or receive many packets from $bip$. Hosts in $ST(aip,t,k)$ make up of opposite hosts set of $aip$, written as $OP(aip,t,k)$. The number of element in $OP(aip,t,k)$, denoted as $|OP(aip,t,k)|$, is no bigger than that of $ST(aip,t,k)$. $|OP(aip,t,k)|$ is the cardinality of $aip$ in sliding time window $W(t,k)$. Sliding super point is defined according to host's cardinality.

\begin{definition}[Sliding super point]
\label{def-slidingSuperPoint}
For a host $aip \in A$, if $|OP(aip,t,k)| \geq \theta$, $aip$ is a sliding super point in sliding time window $W(t,k)$. Where $\theta$ is a positive integer.
\end{definition}

Threshold $\theta$ is defined by users for different applications. It could be selected according to the average cardinality of all host in the past or the normal cardinality of a server. How to get $|OP(aip,t,k)|$ from $ST(aip,t,k)$ is a hard task. Because packets pass through $ER$ with high speed and every packet could only be scanned a time in the stream. How to process every coming packet, judge if it is a super point and give an accurate estimation of $|OP(aip,t,k)|$ at the end of the last time slice of $W(t,k)$ is the key step in the whole algorithm.

\subsection{Sliding rough estimator}

Cardinality estimator scans IP pair stream in a window and gives hosts' cardinalities estimation at the end of this window. A IP pair is extracted form a packet passing through $R$.
\begin{definition}{IPpair}
\label{def_IPpair}
A IP pair is a tuple of two IP addresses extracted from a packet like $<aip_i,bip_j>$ where $aip_i \in A$ and $ bip_j \in B$. IP pair stream in $W(t,k)$ is the stream of IP pairs extracted from every packet passing through $R$ in $W(t,k)$ and it is denoted by $IPpair(A,t,k)$.
\end{definition}

For a host $aip$ in $A$, its IP pair stream $IPpair(aip,t,k)$ is the sub stream of IP pairs which have $aip$ as the first IP addresses. Let $OP(aip,t,k)$ represent the set of the second IP addresses of IP pairs in $IPpair(aip,t,k)$. The task of estimate $aip$'s cardinality is to get the number of hosts in $OP(aip,t,k)$, written as $|OP(aip,t,k)|$, by scanning every IP pair in $IPpair(aip,t,k)$. In order to calculate $|OP(aip,t,k)|$, a key step is to acquire how many distinct IP pairs appear in $W(t,k)$. In another word, for a given IP pair $<aip,bip_j>$ which may appears several times in a slice, the problem is to determine if it appears in $W(t,k)$. This problem is simple in discrete time window where $k=1$ by using a single bit which will be set to 1 if $<aip,bip_j>$ appears. But in sliding time window when $k>1$, there are $k-1$ slices belong to two adjacent sliding time windows at the same time. For example, when $W(t,k)$ slides to $W(t+1,k)$, there are $k-1$ slices, slice $t+1$ to slice $t+k-1$, appearing in them at the same time. $<aip,bip_j>$ may appear in slice $t$ or in some slices after $t$. A sliding cardinality estimator must distinguish these cases and judge if $<aip,bip_j>$ appears in the new window after sliding. This paper devises a new recorder, distance recorder $DR$, to solve this problem. 

$DR$ is a recorder which consists of $z$ bits. It records the distance between the nearest slice where $<aip,bip_j>$ appears and the current scanning slice. For example, suppose that the estimator is now scanning $IPpair(aip,t,k)$ and $<aip,bip_j>$ appears in slice $t-d$ and not appears in slices after $t-d$. Then the value of $DR$ is $d$. When $d=0$, the distance will be 0 too which means $<aip,bip_j>$ appears in the current slice. Only when the distance is smaller than $k$ will host $bip_j$ appears in the sliding time window $W(t,k)$. So $z$ determines the max number of slices in a sliding time window and the max value of $k$ is $2^z-1$. In discrete time window, 1 bit is big enough for $DR$. When all of the $z$ bits in $DR$ is set to 1, it means that the distance is more than $k$ and $DR$ is also initialized to this value. $DR$ has four operations as listed below where $dr$, $dr_1$, $dr_2$ are instance of $DR$.
 \begin{description}
\item[DRinit($dr$)] set every bit of $dr$ to 1;
\item[DRset($dr$)] set every bit of $dr$ to 0;
\item[DRslide($dr$)] if the value of $dr$ is smaller than $2^z-1$, increment $dr$ by 1;
\item[DRjoin($dr_1$,$dr_2$)] return a new $DR$ which has the max value of $dr_1$ and $dr_2$
\end{description}

These operations make sure that $DR$ holds the correct distance for a certain IP pair or a certain host in network $B$. A precise way to calculate $|OP(aip,t,k)|$ is to allocate a $DR$ for every host in $OP(aip,t,k)$ and store these $DR$s by hash table or tree structures. This method could acquire the exact value of $|OP(aip,t,k)|$ by counting the number of $DR$ whose value are smaller than $k$ at the end of every slice. But it also consumes many memory and computing resource. For every host in $OP(aip,t,k)$, precise method requires $32+z$ bits, $32$ bits for IP address and $z$ bits for $DR$. The total memory requirement is more than $|OP(aip,t,k)|*(32+z)/8$ bytes. When $|OP(aip,t,k)|$ is very big, locating $DR$ of every host is also a hard task. So precise method is used to run offline to acquire baseline to evaluate the accuracy of other algorithms.

To saving memory and reducing processing time, estimator methods are required. When detection super point, an estimator only needs to tell if a host is super point or not. Under this requirement, a memory efficient algorithm, sliding rough estimator $SRE$, is devised.

For a host $aip$, the task of judging super point is to determine if $|OP(aip,t,k)| \geq \theta$ by scanning every host in $OP(aip,t,k)$ once. $SRE$ proposed in this paper is a memory efficient algorithm which can tell if a host is a super point in a time period with only $g$ $DR$s and 8 $DR$s are big enough for IPv4 address. Its weight $|SRE|^k$ is the number of $DR$ in it whose value is smaller than k. These $g$ $DR$s are initialized to $2^z-1$ at the begin of a time period. $SRE$ samples and records hosts in $IPpair(aip,t,k)$ by the least significant bits of their hashed value. Least significant bit of an integer is defined in the below.

\begin{definition}[Least significant bit, LSB]
\label{def-LeastSignifcantBit}
Given an integer $i$, let $BIN(i)$ represent its binary formatter. The least significant bit of $i$, $LSB(i)$, is the index of the first ‘1' bit of $BIN(i)$ starting from right.
\end{definition}

For example, $LSB(3)=0$, $LSB(40)=3$. The binary formatters of 3 and 40 are ``11" and ``101000". The first bit of $BIN(3)$ is 1, so $LSB(3)$ equals to 0. While $BIN(40)$ meets its first ‘1'until the fourth bit, so its $LSB$ is 3. For every host $bip$ in $ IPpair(aip,t,k)$, $SRE$ hashes it to a random value between 0 and $2^{32} -1$ by a hash function \cite{hash_AsmallApproximatelyMinWiseIndependentHF} $H_1$. If $LSB(H_1(bip))$ is smaller than an integer $\tau$, this IP will not be recorded by $SRE$ where $\tau$ is derived from $\theta$ by equation\ref{eq_getLsbThreshold}.

\begin{equation}\label{eq_getLsbThreshold}
\tau=ceil(log_2(\theta/ g ))
\end{equation} 

When $LSB(H_1(bip)) \geq \tau$, a bit selected by $H_2(bip)$ will be set where $H_2$ is another hash function mapping $bip$ to a value between 0 and $g -1$. After updating a $DR$, if $|SRE|^k$ is no smaller than $\rho * g$, $|OP(aip,t,k)|$ is judged as bigger than $\theta$ by $SRE$, where $\rho = 0.99 * (1-e^{-1/3})$. $\rho$ is acquired from \cite{DC:AnOptimalAlgorithmDistinctElementProblem}. $SRE$ deals with every host in $IPpair(aip,t,k)$ in this way. 

$SRE$ has a high probability to report a super point. Then we will give its mathematical analyze.

\begin{lemma}
\label{la-RE_fullN}
Suppose there are $\alpha$ different balls, $g$ different boxes and $\alpha \geq g$. Throw all of these balls randomly to these boxes. Let $FN(\alpha, g)$ represent the number of situations that every $g$ boxes has at least a ball. Then $FN(\alpha, g)= g ^ \alpha -\sum_{i=1}^{r-1}C_r^i*FN(\alpha, i)$ and $FN(\alpha,1) =1$.
\end{lemma}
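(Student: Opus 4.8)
The plan is to recognize that $FN(\alpha,g)$ counts exactly the placements of $\alpha$ distinguishable balls into $g$ distinguishable boxes in which \emph{every} box receives at least one ball, i.e. the surjective placements. The proof rests on a single counting identity obtained by sorting \emph{all} placements according to which boxes turn out to be non-empty, and then isolating the fully surjective term.

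First I would count the unrestricted placements. Since each of the $\alpha$ balls is thrown independently into one of the $g$ boxes, there are exactly $g^\alpha$ placements in total. Next I would partition this collection by its \emph{support}, the set of boxes that end up non-empty. For any fixed subset $S$ of the $g$ boxes with $|S|=i$, a placement has support exactly $S$ precisely when all $\alpha$ balls land in $S$ and none of the $i$ boxes of $S$ is empty; by definition the number of such placements is $FN(\alpha,i)$, and this count depends only on $i$, not on which particular $i$ boxes form $S$. Since there are $C_g^i$ subsets of size $i$, and since $\alpha\geq 1$ forces at least one non-empty box while $\alpha\geq g$ allows up to $g$ of them, summing over all possible support sizes yields
\begin{equation*}
g^\alpha=\sum_{i=1}^{g}C_g^i\,FN(\alpha,i).
\end{equation*}

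Then I would simply peel off the top term. The $i=g$ summand has coefficient $C_g^g=1$ and equals $FN(\alpha,g)$, so transposing the remaining terms gives
\begin{equation*}
FN(\alpha,g)=g^\alpha-\sum_{i=1}^{g-1}C_g^i\,FN(\alpha,i),
\end{equation*}
which is the claimed recurrence (with the index $r$ appearing in the statement understood as $g$). The base case is immediate: with a single box every one of the $\alpha\geq 1$ balls must go into it, the box is non-empty, and there is exactly one such placement, so $FN(\alpha,1)=1$.

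I expect the only genuinely delicate point to be the support-partition step, specifically the claim that the number of placements whose support is exactly a given $i$-subset $S$ equals $FN(\alpha,i)$ independently of $S$. This is a relabelling bijection: renaming the boxes of $S$ as $1,\dots,i$ carries a surjection onto $S$ to a surjection onto $i$ boxes and is reversible, so the two counts coincide. Everything else is bookkeeping, and the identity $g^\alpha=\sum_{i=1}^{g}C_g^i\,FN(\alpha,i)$ does the real work.
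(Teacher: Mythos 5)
Your proof is correct and follows essentially the same route as the paper's: both partition the $g^\alpha$ total placements according to which boxes end up non-empty, count the placements with support of size $i$ as $C_g^i\,FN(\alpha,i)$, and subtract the non-surjective cases to isolate $FN(\alpha,g)$. Your write-up is simply a more careful version of the same argument (including correctly reading the statement's stray index $r$ as $g$ and justifying the relabelling bijection that the paper leaves implicit).
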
 
\begin{proof}
There are total $g ^ \alpha$ situations to threw $\alpha$ balls to $g$ boxes. When there is only a box, there is only a situation, throwing all balls to it. When throwing all balls to $i$ boxes and all of these boxes contain at least on balls, there are $C_r^i*FN(\alpha, i)$ situations. Deduct all situations that all balls are thrown to a subset of $g$ boxes from $g ^ \alpha$, the rest is the number of situations that there are no empty boxes.
\end{proof} 

\begin{theorem}
 \label{th-RE_NoneEmptyBoxesN}
Throw $\alpha$ balls to $g$ boxes. Let $g_1$ represent the number of boxes that contain at least a ball. The number of situations that there are $g_1$ balls are none empty,denoted by $FN(\alpha,g,g_1)$, is $C_{g}^{g_1}*FN(\alpha,n)$, where $1\leq n \leq g$.  
\end{theorem}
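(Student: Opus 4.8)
The plan is to prove this by a direct partition-and-count argument, reducing the number of arrangements with exactly $g_1$ non-empty boxes to the already-defined quantity $FN(\alpha,g_1)$ from Lemma \ref{la-RE_fullN}. The guiding observation is that in any throw of $\alpha$ balls into $g$ boxes, the set of non-empty boxes is uniquely determined by the arrangement; hence the event ``exactly $g_1$ boxes are non-empty'' splits cleanly according to which particular $g_1$-subset of boxes ends up occupied.

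First I would fix a subset $S$ of the $g$ boxes with $|S|=g_1$ and count the arrangements in which the non-empty boxes are exactly those of $S$. This forces every ball to land in some box of $S$ and every box of $S$ to receive at least one ball, which is precisely an arrangement of the $\alpha$ balls into the $g_1$ boxes of $S$ with no box left empty. By Lemma \ref{la-RE_fullN} the number of such arrangements equals $FN(\alpha,g_1)$, and crucially this value does not depend on which particular subset $S$ was selected, since the boxes are interchangeable for counting purposes.

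Next I would count the admissible subsets: choosing which $g_1$ of the $g$ boxes are to be the non-empty ones can be done in $C_g^{g_1}$ ways. Because the set of non-empty boxes induced by an arrangement is unique, arrangements arising from two distinct subsets $S \neq S'$ can never coincide; the $C_g^{g_1}$ families of arrangements are therefore pairwise disjoint and together exhaust all arrangements having exactly $g_1$ non-empty boxes. Combining the addition and multiplication principles then yields $FN(\alpha,g,g_1)=C_g^{g_1}\cdot FN(\alpha,g_1)$, matching the stated identity once the ambiguous index $n$ in the theorem is read as $n=g_1$.

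The computation is essentially bookkeeping, so no single step is a serious obstacle; the one point deserving care is the disjointness claim above. I would make explicit that the map sending an arrangement to its set of non-empty boxes is well defined and single-valued, so that partitioning the sample space by this set is legitimate and no arrangement is tallied under two different subsets. That well-definedness is exactly what turns the answer into the clean product $C_g^{g_1}\cdot FN(\alpha,g_1)$ rather than an over- or under-count, and it is the only place where a careless argument could go wrong.
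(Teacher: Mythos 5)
Your proposal is correct and follows essentially the same route as the paper: choose the $g_1$-subset of non-empty boxes in $C_g^{g_1}$ ways and multiply by $FN(\alpha,g_1)$ surjective placements from Lemma \ref{la-RE_fullN}. You additionally make the disjointness of the subcases explicit and correctly read the theorem's stray index $n$ as $g_1$, which only tightens the paper's (typo-laden) argument rather than changing it.
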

\begin{proof}
The rest $g-g_1$ balls are empty. There are $C_{g}^{n}$ situations to choose $g –g_1$ empty balls. Each situation has $FN(\alpha,g_1)$ methods to throw $\alpha$ balls. So the number of total situations is $C_{g}^{g_1}*FN(\alpha,g_1)$.
\end{proof}

$OP(aip,t,k)$ could be regarded as the set of balls and $g$ bits could be regarded as boxes in theorem \ref{th-RE_NoneEmptyBoxesN}. $|SRE|^k$ means the number of $DR$ whose values are smaller than $k$. Suppose there are $\alpha$ hosts in $OP(aip,t,k)$ updating $SRE$. The probability that there are $|SRE|^k = g_1$ is :

\begin{equation}\label{eqt_prb_alphaNoneEmpytBox}
 Pr\{\alpha,g,g_1\}=\frac{FN(\alpha,g,g_1)}{{g}^{\alpha}}
\end{equation}

Every host in $OP(aip,t,k)$ has probability $\frac{1}{2^\tau}$ to update $SRE$. So the probability that there are $\alpha$ hosts in $OP(aip,t,k)$ updating $SRE$ is:

\begin{equation}
\label{eqt_prb_alphaNum}
\begin{aligned}
 &Pr\{|OP(aip,t,k)|,\alpha\} \\
 &=C_{|OP(aip,t,k)|}^{\alpha}*{\frac{1}{2^ \tau}}^\alpha*(1-\frac{1}{2^\tau})^{|OP(aip,t,k)|-\alpha}
 \end{aligned}
\end{equation}

Combine equation \ref{eqt_prb_alphaNoneEmpytBox} and \ref{eqt_prb_alphaNum}, we will get the probability that there are $g_1$ $DR$ being set in $RE$ after scanning $ST(aip,t,k)$ as shown in equation \ref{eqt_prb_1BitsNAfterScanningHostStrm}.
\begin{equation}
\label{eqt_prb_1BitsNAfterScanningHostStrm} 
\begin{aligned}
&Pr\{|OP(aip,t,k)|,g,\tau,g_1\} \\
&=\sum_{\alpha=g_1}^{|OP(aip,t,k)|}Pr\{|OP(aip,t,k)|,\alpha\}*Pr\{\alpha,g,g_1\}
\end{aligned}
\end{equation}.

The probability that there are more than $n$ ‘1' bits in $SRE$ after scanning $ST(aip,t,k)$ could be derived from \ref{eqt_prb_1BitsNAfterScanningHostStrm} as shown in equation \ref{eqt_prb_1BitsNPlusAfterScanningHostStrm}.
\begin{equation}
\label{eqt_prb_1BitsNPlusAfterScanningHostStrm}
\begin{aligned} 
&Pr^+{|OP(aip,t,k)|,g,\tau,n} \\
&=\sum_{g_1=n}^{g}Pr\{|OP(aip,t,k)|,g,\tau,g_1\}
\end{aligned}
\end{equation}

Equation \ref{eqt_prb_1BitsNPlusAfterScanningHostStrm} proofs that $SRE$ has a high probability to detect super point. But it is a light weight estimator and can't give an accurate cardinality estimation. Sliding linear estimator introduced in the next makes up this shortage.
\subsection{Sliding linear estimator}
Linear estimator, $LE$, is a famous cardinality estimation algorithm\cite{DC:aLinearTimeProbabilisticCountingDatabaseApp}. It uses ${g}'$ bits, which are initialized to 0 at the beginning of a discrete time window, to estimate host's cardinality. When scanning a host $bip$ in $IPpair(aip,t,k)$, one bit in $LE$ selected by hash function $H_3$ will be set. $H_3(bip)$ maps $bip$ to a random value between 0 and ${g}'-1$. Let $|LE|$ represent the weight of $LE$, which means the number of 1 bit in it. At the end of a discrete time window, $|OP(aip,t,1)|$ will be estimated by the following equation.

\begin{equation}\label{eq_LE_cardinalityEst}
 |OP(aip,t,1)|= - {g}'*ln(\frac{{g}'-|LE|}{{g}'})
\end{equation} 

But $LE$ only works when $k=1$. In order to estimate cardinality under sliding time window, sliding linear estimator $SLE$ replaces the ${g}'$ bits in $LE$ with ${g}'$ $DR$s. The weight of $SLE$ denoted as $|SLE|^k$ is the number of short integer whose value is smaller than $k$. $SLE$ estimates a host's cardinality by equation \ref{eq_SLE_cardinalityEst}.

\begin{equation}
\label{eq_SLE_cardinalityEst}
 |OP(aip,t,k)|'= - {g}'*ln(\frac{{g}' - |SLE|^k}{{g}'})
\end{equation} 

According to paper \cite{DC:aLinearTimeProbabilisticCountingDatabaseApp}, the estimating accuracy of $SLE$ depends on the value of $g'$, the bigger $g'$ is, the more accurate the estimating result will be. But a big $g'$ requires more time to calculating $|SLE|^k$ which increasing the estimating time. So $SLE$ is only suit to estimating cardinality of candidate super points at the end of slice, instead of estimating every time while scanning IP pair. When combining $SRE$ with $SLE$, an novel sliding time window super point's cardinality estimating algorithm $SRLA$ is proposed.

\section{Detect super points and estimate their cardinalities on GPU}
Network $A$ contains a great number of hosts and it's not efficient to allocate a $SRE$ and $SLE$ for every host. This section introduces a novel algorithm which can detect super point and estimate their cardinalities under sliding time window with fixed number of estimators. 

\subsection{Scan packets and generate super points candidate list}
Because 8 $DR$s are big enough for $SRE$ to judge if a host is super point, it can detect super point fast. When using with $LE$, it can estimate super point's cardinality more quickly. Motivated by this idea, we design a novel estimator, sliding estimator $SE$. $SE$ consists of a $SRE$, a $LE$ and 16 bits. The 16 bits in $SE$ is used to indicate that a host has been judged as a super point in the time slice and we call them as super point indicator $SI$. When a host $aip \in A$ is firstly judged as a super point, a bit in $SI$, selected by a hash function $H_3(aip)$ where $H_3$ hashes $aip$ to a random value between [0,15], will be set to 1. Let $SI[i]$ point to the $i$th bit in $SI$. An array of $SE$ with $u$ rows and $v$ columns, denoted by $SEA$, is used to detect super points in the network $A$ and estimate their cardinalities. Figure \ref{fig_SEAmodel} illustrates the structure of $SEA$.
\begin{figure}[!ht]
\centering
\includegraphics[width=0.47\textwidth]{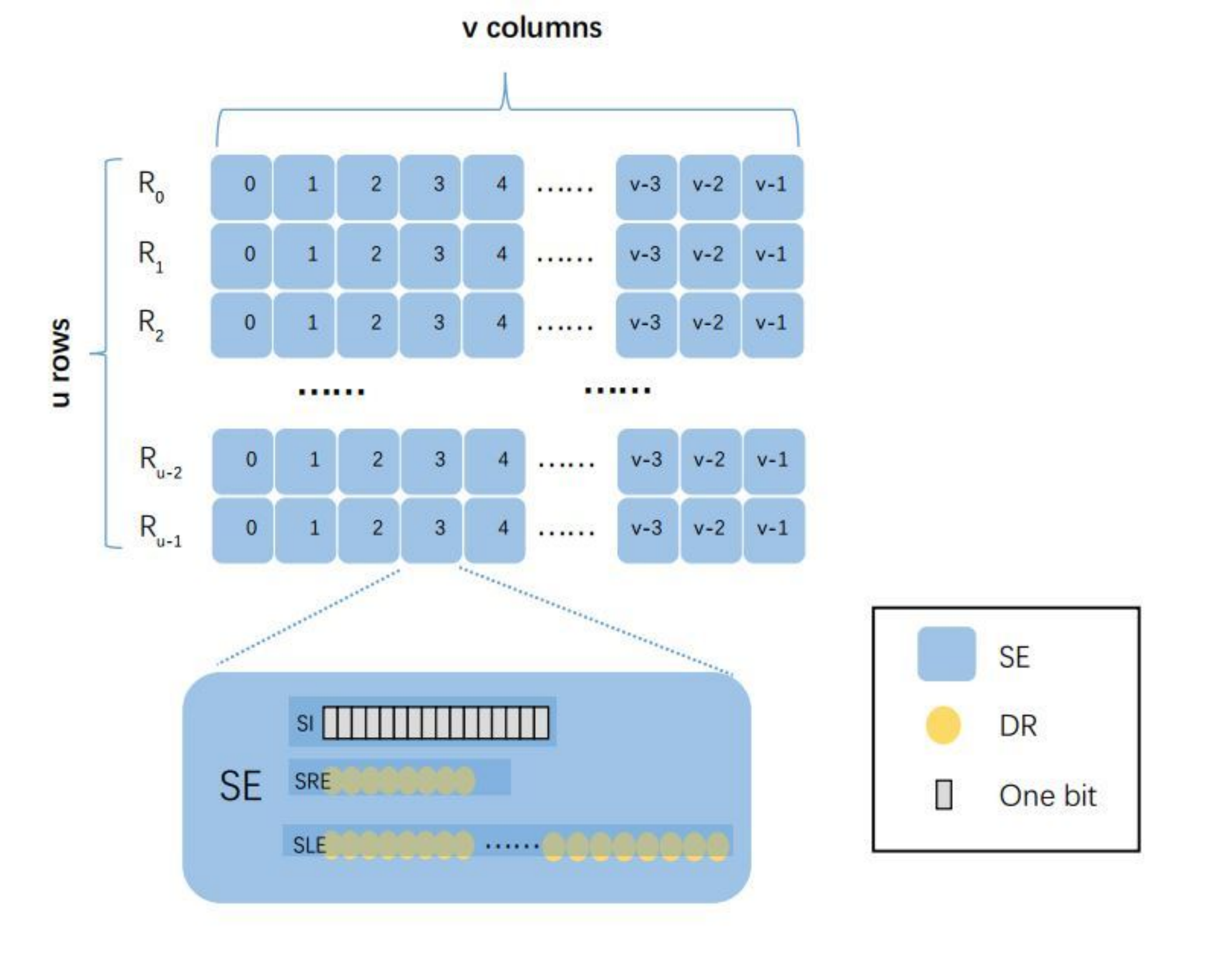}
\caption{Structure of SEA}
\label{fig_SEAmodel}
\end{figure} 

Every IP pair $<aip,bip>$ will update $u$ $SE$ selecting from the $u$ rows of $SEA$ by $u$ hash functions, $RH_i(aip)$ where $0 \leq i \leq u-1$. $USE(aip)$ represents the union $SE$ of these $u$ $SE$s in $SEA$ and $USI(aip)$, $URE(aip)$ and $ULE(aip)$ represent the $SI$, $SRE$ and $LE$ in $USE(aip)$ respectively. For a host $aip \in A$, its union $SE$ in $SEA$ is acquired by algorithm \ref{alg_aip_unionSE}. 

\begin{algorithm}  
\caption{UnionSE}  
\label{alg_aip_unionSE}
\begin{algorithmic}
\Require{ $aip \in A$;\\
         $SEA$}
\Ensure{ $USE(aip)$ the union of $SE$ relating with $aip$}

\State Init $USE(aip)$
\State set every bit of $SI$ in $USE(aip)$ to 1
\State set every $DR$ in $URE(aip)$ to $2^z-1$
\State set every $DR$ in $ULE(aip)$ to $2^z-1$
\For{$i \in [0, u-1]$}
\State $USI(aip) \Leftarrow USI(aip) \& SI[i,RH_i(aip)]$
\For{$j \in [0,g-1]$}
\State $URE(aip)[j] \Leftarrow DRjoin(URE(aip)[j], RE[i, RH_i(aip)][j])$
\EndFor
\For{$j \in [0,g'-1]$}
\State $ULE(aip)[j] \Leftarrow DRjoin(ULE(aip)[j], LE[i, RH_i(aip)][j])$
\EndFor
\EndFor
\State Return $USE(aip)$
\end{algorithmic}
\end{algorithm}

$SI[i,j]$, $RE[i,j]$ and $LE[i,j]$ is the $SI$, $SRE$ and $LE$ of $SE$ in the $i$th row, $j$th column.
After updating these $SE$s, $aip$ will be checked by $URE(aip)$ to test if it is a super point. If it is and $USI[H_4(aip)]$ is zero, $aip$ will be inserted in to a candidate super point list and the $H_4(aip)$th bit of every $SI[i,RH_i(aip)]$ will be set to 1. This will avoid to add $aip$ to the candidate super point list more times. 

To calculate $USE(aip)$ every time scanning an IP pair is time consuming, especially that the $g'$ is very big often more than one thousand. We only need to acquire $USI(aip)$ and $URE(aip)$ for super point judging and candidate super point list insertion. $SI$ contains only 16 bits and $SRE$ consists of only 8 $DR$ for IPv4 address. The merging time will be reduced greatly. Algorithm \ref{alg_scanIPpair_updateSEA} describes how to update $SEA$ for every IP pair.

\begin{algorithm}                       
\caption{ScanIPpair}          
\label{alg_scanIPpair_updateSEA}                           
\begin{algorithmic}
\Ensure{$SEA$,\\
       IP pair $<aip,bip>$ \\
       Candidate super point list $CSIP$
}
\State $leidx \Leftarrow H_1(bip)$
\For {$ridx \in [0,u-1]$}
 \State $LE[i,RH_i(aip)][leidx] \Leftarrow 0$
\EndFor

\If{$LSB(H_1(aip)) \leq \tau$}
\State Return
\EndIf
\State $reidx \Leftarrow H_2(bip)$
\State $siidx \Leftarrow H_4(bip)$
\For {$ridx \in [0,u-1]$}
  \State $RE[i,RH_i(aip)][reidx] \Leftarrow 0$
\EndFor
\If{$|URE(aip)|^k \geq \tau $}
\If{$USI(aip)[siidx]$ equal to 0}
  \State insert $aip$ into $CSIP$
   \For {$ridx \in [0,u-1]$}
    \State $SI[i,RH_i(aip)][siidx] \Leftarrow 1$  
\EndFor
\EndIf
\EndIf
\end{algorithmic}
\end{algorithm}

 Algorithm \ref{alg_scanIPpair_updateSEA} firstly updates $u$ $LE$s by setting a $DR$ of them to 0. Then it begins to update $SRE$ after checking $aip$. Not every IP pair could pass this checking and only about $\frac{1}{2^\tau}$ of them updates $SRE$. This checking process accelerates the scanning speed greatly. When a IP pair updates, the first IP address of it will be checked if is a super point by the union $SRE$. Super point reported by $SRE$ will be inserted into the candidate list $CSIP$. Algorithm \ref{alg_scanIPpair_updateSEA} deals with every IP pair in a slice. After scanning all IP pairs in this slice, the cardinality of hosts in the candidate super point list could be acquired from $SEA$ by the algorithm described in the next section.

\subsection{Estimate cardinality of super point}
$SEA$ uses fix number of $LE$, $u*v$ $SE$s, to estimate the cardinalities of all hosts in $A$. This causes that a $LE$ will record more than one hosts' cardinalities and the result will be over estimating. In order to reduce the influence, $u$ $LE$s will be used together and a host's cardinality will be estimated from the union $LE$. But when there are many distinct IP pairs in a slice, there are still many $DR$ in the union $LE$ setting by other hosts. Estimating the number of these error $DR$ and remove them from the union $LE$ helps to improve the accuracy of cardinality estimation. 

Let $|LDR(i)|^k$ represent the number of all $LE$s' $DR$ in the $i$th row whose values are smaller than $k$. Then the probability that a $DR$ of a $LE$ in the $i$th row is set by some host is $P_{dr}^{LE}(i)=\frac{|LDR(i)|^k}{g'*v}$. $|LDR(i)|^k$ could be acquired by scanning every $LE$ in the $i$th row. Suppose a $LE$ is used to record the cardinality of a host $aip$ exclusively. Then $|LE|^k$ is expected to be $d_1=g'-g'*e^{- \frac{|OP(aip,t,k)|}{g'}}$, according to equation \ref{eq_SLE_cardinalityEst}. In the union $LE$, every of these $d_1$ $DR$ will be set by some other hosts with probability $UP_{dr}^{LE}$ as shown in the following equation.
\begin{equation}
\label{equ_UnionLE_dr_setProb}
 UP_{dr}^{LE}=\prod_{i=0}^{u-1}P_{dr}^{LE}(i)
\end{equation}

Let $|ULE(aip)|^k$ represent the number of $DR$ in the union $LE$ whose values are smaller than $k$. Then $|ULE(aip)|^k= d_1+(g-d_1)*UP_{dr}^{LE}$. And $aip$'s cardinality could be estimated by the following equation.
\begin{equation}
\label{equ_estCardinality_ULE}
|OP(aip,t,k)|'=-g'*ln(1-\frac{|ULE|^k-g'*UP_{dr}^{LE}}{g'*(1-UP_{dr}^{LE})})
\end{equation}

Equation \ref{equ_estCardinality_ULE} gives a more accurate estimation by removing the error setting $DR$ from $ULE$. The cardinality of every host in the candidate super point list will be estimated in this way. 

$SRLA$ works under sliding time window. To do this, $SRLA$ updates $SEA$ incrementally instead of reinitialize it before every time slice. After estimating super point's cardinality, $SRLA$ updates all $SI$, $DR$ and the candidate super point list by algorithm \ref{alg_updateSEA_beforeNextTimeSlice}.

\begin{algorithm}                       
\caption{SEA updating before sliding}          
\label{alg_updateSEA_beforeNextTimeSlice}               
\begin{algorithmic}
\Require{$SEA$\\
        Candidate super point list $CSIP$}
\Ensure {New candidate super point list $NCSIP$}

\For{$si$ in $SI$ of every $SE$ in $SEA$}
\State $si \Leftarrow 0$
\EndFor
\For{$dr$ in $DR$ of all $SRE$ and $LE$ of $SE$ in $SEA$}
\If{$dr < 2^z-1$}
\State $dr++$
\EndIf
\EndFor

\For {$aip$ in $CSIP$}
 \If{$|URE(aip)|^k \geq g*\rho $} 
  \State insert $aip$ into $NCSIP$
   \For {$ridx \in [0,u-1]$}
    \State $SI[i,RH_i(aip)][siidx] \Leftarrow 1$  
\EndFor 
\EndIf
\EndFor

\State Return $NCSIP$
\end{algorithmic}
\end{algorithm}

Algorithm \ref{alg_updateSEA_beforeNextTimeSlice} not only updates all $DR$ in $SEA$ but also derives a new candidate super point list from now current one for the next time window. This makes sure that no super points will be neglected. For example, if $aip$ is a super point in $W(t+1,k)$ and all of its opposite hosts appear in time slice $t+1$ to $t+k-1$. In this case, $aip$ will not be inserted into the candidate super point list while scanning IP pairs in time slice $t+k$. But it could be detected out from the candidate super point lists in $W(t,k)$.

\subsection{Deploy on GPU}
While scanning IP pairs, $SRLA$ only sets some $SI$s, $DR$s. Bot $SI$s and $DR$ could be set by server threads at the same time without causing any mistakes, because a bit or a $DR$ is still being “1" or zero after setting several times. So several IP pairs could be processed concurrently. GPU is a special device which contains plenty computing cores and has high memory accessing through put. Although the ability of every single core of CPU is a little stronger than that of GPU, but the total computing resource of a GPU card is much more abundant than that of CPU considering the plenty number of cores a GPU containing. 

GPU is good at these tasks which process huge data with the same instructions. $SRLA$ is such one that scanning different IP pairs by algorithm \ref{alg_scanIPpair_updateSEA}. But GPU could only access its own memory directly, so these IP pairs should be stored in a buffer and then copied to GPU's graphic memory as shown in figure \ref{fig_GPU_scanIPpairModle}.

\begin{figure}[!ht]
\centering
\includegraphics[width=0.47\textwidth]{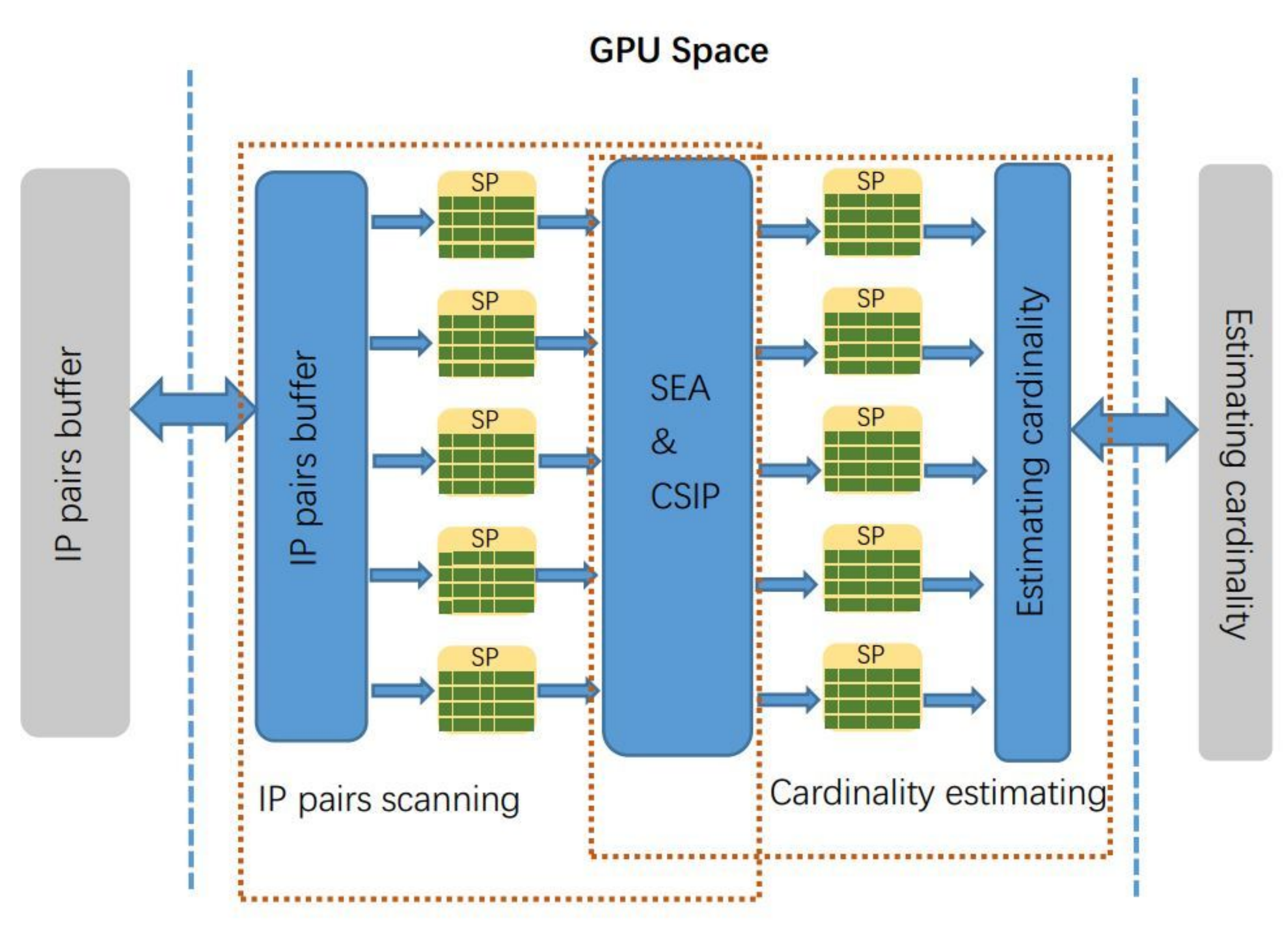}
\caption{Structure of SEA}
\label{fig_GPU_scanIPpairModle}
\end{figure}

Before $SRLA$ starting, $SEA$ will be initialized on GPU's graphic memory to be accessed by GPU threads directly. When the IP pairs buffer is full, it will be sent to GPU's global memory by PCIe bus. After receiving these IP pairs, GPU launches thousands of cores to deal with them at the same time.
Stream processor $SP$ is a set of hundreds of computing cores. A GPU card contains several $SP$s. Every $SP$ reads a part of IP pairs in the buffer and distributes them to different cores for further processing. Every core runs algorithm \ref{alg_scanIPpair_updateSEA} to update $SEA$ and candidate super point list $CSIP$ in a time slice.

After scanning all IP pairs in a slice, every computing core estimates cardinality of candidate super point by equation \ref{equ_estCardinality_ULE}.

Let $C_u$ represent the time of IP pairs scanning, $C_e$ represent the time of candidate super point's cardinality estimation and $C_s$ represent the duration of a slice. In order to deal with high speed network traffic in real time, $C_u+C_e$ must be smaller than $C_s$. Cardinality of candidate super point will not be estimated until the end of a slice, so the estimating latency under sliding time window is $C_s+C_e$. Experiments proves that for a 40Gb/s network, $SRLA$'s $C_e$ is as small as 300 milliseconds with a common GPU card and $C_u$ is smaller than 150 milliseconds. This shows that $SRLA$ works well under a sliding time window whose sliding step could be as small as 1 second when running on GPU.

\section{Experiment}
To evaluate the performance of $SRLA$, we use a real world traffic collecting from the node of JiangSu province of CERNET. The experiment data are two one-hour traffics starting from 13:00 on October 21 and 23, 2017. There are two parts in our experiments: super point cardinality estimation under discrete time window and super point cardinality under sliding time window. In both of these parts, super point's threshold $\theta$ is set to 1024. The experiment runs on a PC with GPU card Nvidia GTX 650, 1 GB graphic memory.

\subsection{discrete time window experiments}
The parameter of the discrete time window is set to $C_s=300$ seconds, $k=1$ and $z=1$. There are 12 discrete time windows in a one-hour traffic and the average information of these two traffics are listed in talbe \ref{tbl-trafficInf}.
\begin{table*}
\centering
\caption{Traffic information}
\label{tbl-trafficInf}
\begin{tabular}{c}                                                                                                                                                                                                                           
\centering
\includegraphics[width=0.8\textwidth]{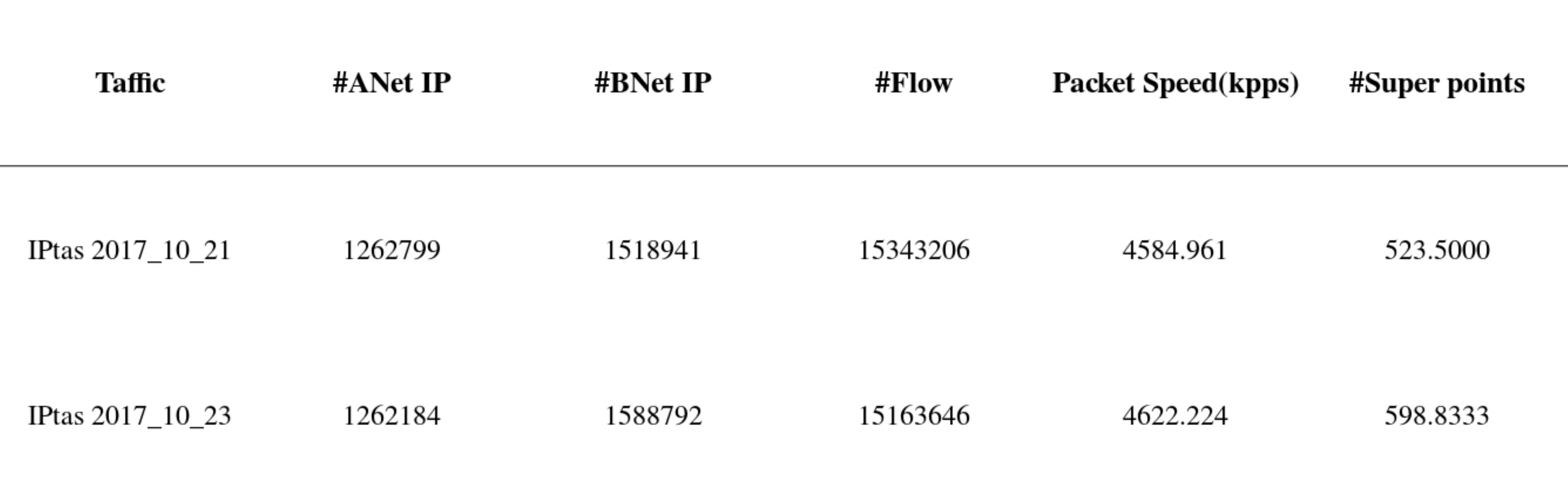}
\end{tabular}
\end{table*}

In table \ref{tbl-trafficInf}, “$\#ANet IP$" and “$\#BNet IP$" mean the number of hosts in $A$ and $B$ separately and “$\#Flow$" means the average number of distinct IP pairs in a discrete time window. From it we can see that, the average packets speed of this traffic is 4.5 mpps (million packets per second) and super point makes up smaller than 0.047 percent of the total hosts in $A$. 

Accuracy is a key merit of cardinality estimation. We measure the accuracy by false positive rate(FPR), false negative rate(FNR) as defined below.
\begin{definition}[FPR/FNR]
\label{def-fpr_fnr}
For a traffic with $N$ super points, an algorithm detects $N'$ super points. In the $N'$ detected super points, there are $N^+$ hosts which are not super points. And there are $N^-$ super points which are not detected by the algorithm. FPR means the ratio of $N^+$ to $N$ and FNR means the ratio of $N^-$ to $N$.
\end{definition}

FPR may decrease with the increase of FNR. If an algorithm reports more hosts as super point, its FNR will decrease but FPR will increase. So we use the sum of FPR and FNR, total false rate TFR, to evaluate the accuracy of an algorithm. 

The parameters of $SEA$ influences the accuracy of $SRLA$. We firstly compare the accuracy of $SRLA$ with different $u$, $v$ and $g'$. Figure \ref{SLRA_ComparingDifferentPara_IPtas2017_10_21} and \ref{SLRA_ComparingDifferentPara_IPtas2017_10_23} show the average accuracy of $SRLA$ under the 12 discrete time windows of different traffics and parameters.

\begin{figure*}[!ht]
\centering
\includegraphics[width=0.8\textwidth]{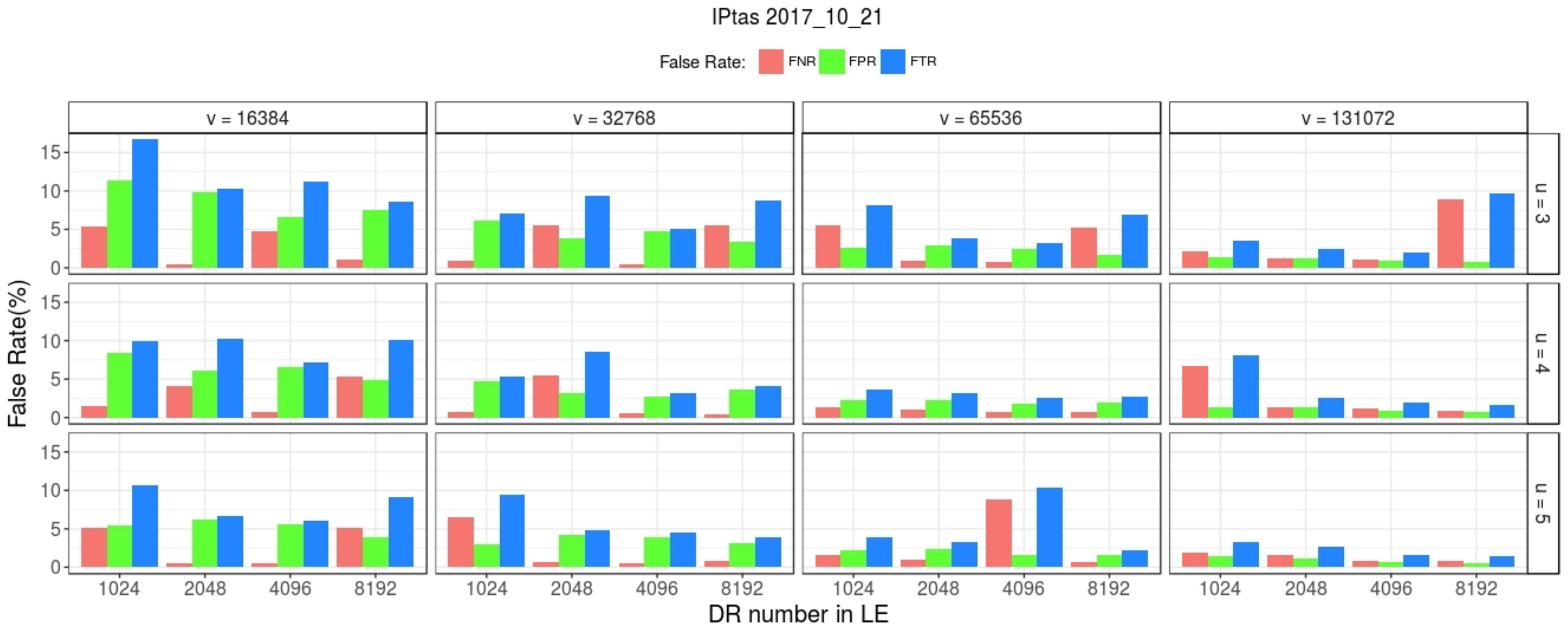}
\caption{SRLA accuracy comparing on traffic Oct, 21, 2017}
\label{SLRA_ComparingDifferentPara_IPtas2017_10_21}
\end{figure*}

\begin{figure*}[!ht]
\centering
\includegraphics[width=0.8\textwidth]{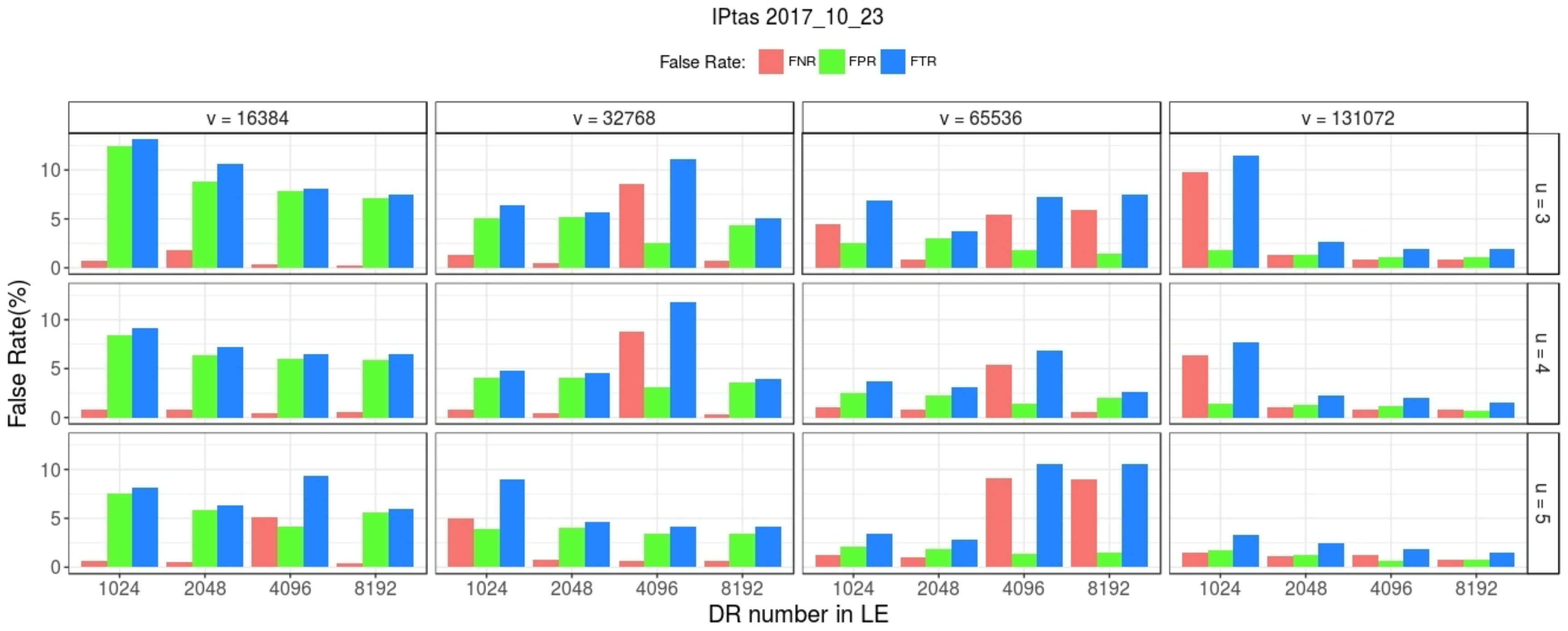}
\caption{SRLA accuracy comparing on traffic Oct, 23, 2017}
\label{SLRA_ComparingDifferentPara_IPtas2017_10_23}
\end{figure*}

Every sub figure compares the accuracy of $SRLA$ under different $g'$, changing from 1024 to 8192. Big $g'$ helps to reduce $TFR$ in most cases. But big $g'$ requires more time to acquire $|SLE|^k$ which will cause a big $C_e$. $TFR$ also decreases gradually with the increase of $v$. But when $v$ grows to 131072 from 65536, $TFR$ decreases slightly but memory doubles. $SRLA$ has the lowest false rate when $g'$, $u$ and $v$ are set to the biggest value. But memory requirement and $C_e$ also grow rapidly. When $g'=1024$, $v=65536$ and $u=4$, $SRLA$'s false rates are small enough and in the following experiments, $SRLA$'s parameters are set as these values. And when running under discrete time window, $z=1$ is enough for $DR$. 

To compare the performance of SRLA with other algorithms, we use DCDS\cite{HSD:ADataStreamingMethodMonitorHostConnectionDegreeHighSpeed}, VBFA\cite{HSD:DetectionSuperpointsVectorBloomFilter}, GSE \cite{HSD:GPU:2014:AGrandSpreadEstimatorUsingGPU} to compare with it. Table \ref{tbl_avg_hsd_rlt} lists the average result of all the 24 discrete time windows.

\begin{table*}
\centering
\caption{Comparing result under}
\label{tbl_avg_hsd_rlt}
\begin{tabular}{c}                                                                                                                                                                                                                           
\centering
\includegraphics[width=0.7\textwidth]{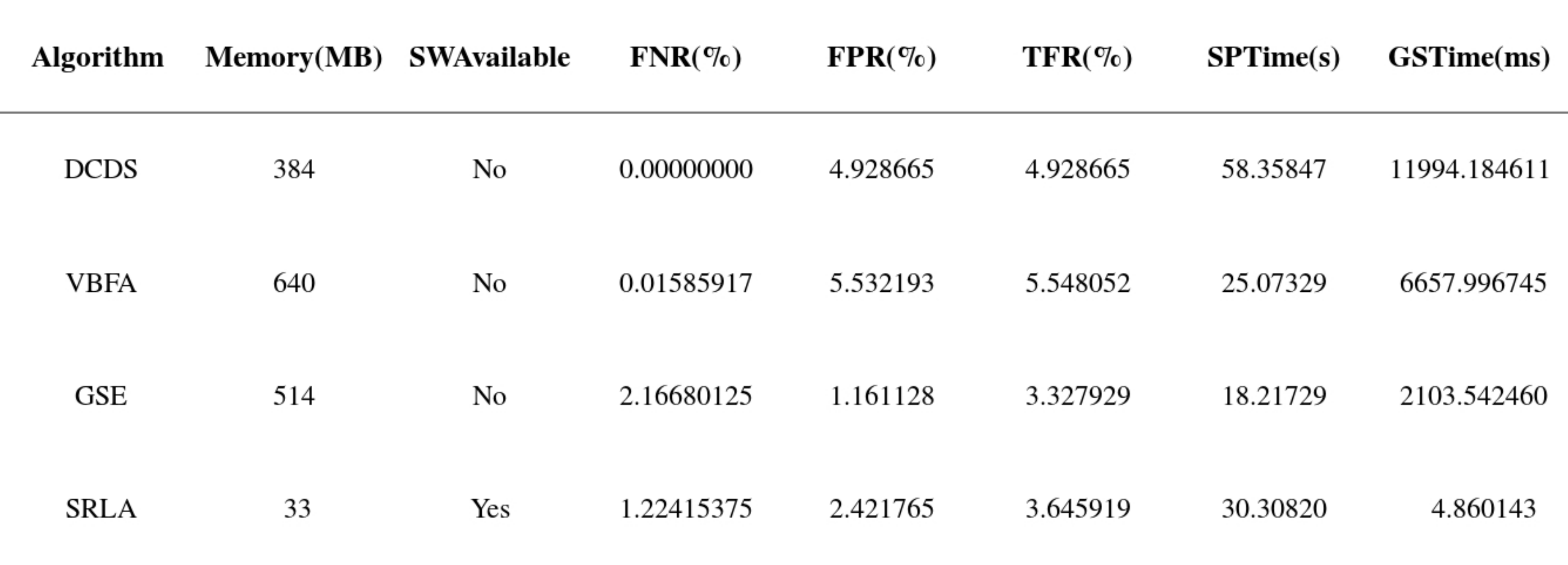}
\end{tabular}
\end{table*}

GSE has a lower FPR than other algorithms. It can remove fake super points according the estimating flow number. But GSE may remove some super points too, which causes it has a higher FNR. Because it uses discrete bits to record host's cardinality, collecting all of these bits together when estimate super points cardinality will use lots of time. DCDS uses CRT when storing host's cardinality. CRT has a better randomness which makes DCDS has a lower FNR. But CRT is very complex containing many operations. So DCDS's speed is the lowest among all of these algorithms. VBFA has the fastest speed but its TFR is higher than that of SRLA.

From table \ref{tbl_avg_hsd_rlt} we can see that, $SRLA$ uses the smallest memory, smaller than one-twentieth of others' memory. Because $SRLA$ generates a candidate super point list while packets scanning, so it has the smallest $C_e$, only 4 milliseconds. And $SRLA$ is the only one which can run under sliding time window.

\subsection{sliding time window experiments}

In the sliding time window experiments, a time slice is set to 1 second, $k$ is 300 and $z$ equals to 16. We let the window sliding from $W(0,300)$ to $W(2999,300)$ and $SRLA$ runs on traffic 2017-10-21. $SRLA$'s FPR, FNR and TFR are illustrated in figure \ref{SW_g16sv7_fpr_IPtas2017_10_21}, \ref{SW_g16sv7_fnr_IPtas2017_10_21} and \ref{SW_g16sv7_tfr_IPtas2017_10_21}.

\begin{figure*}[!ht]
\centering
\includegraphics[width=0.8\textwidth]{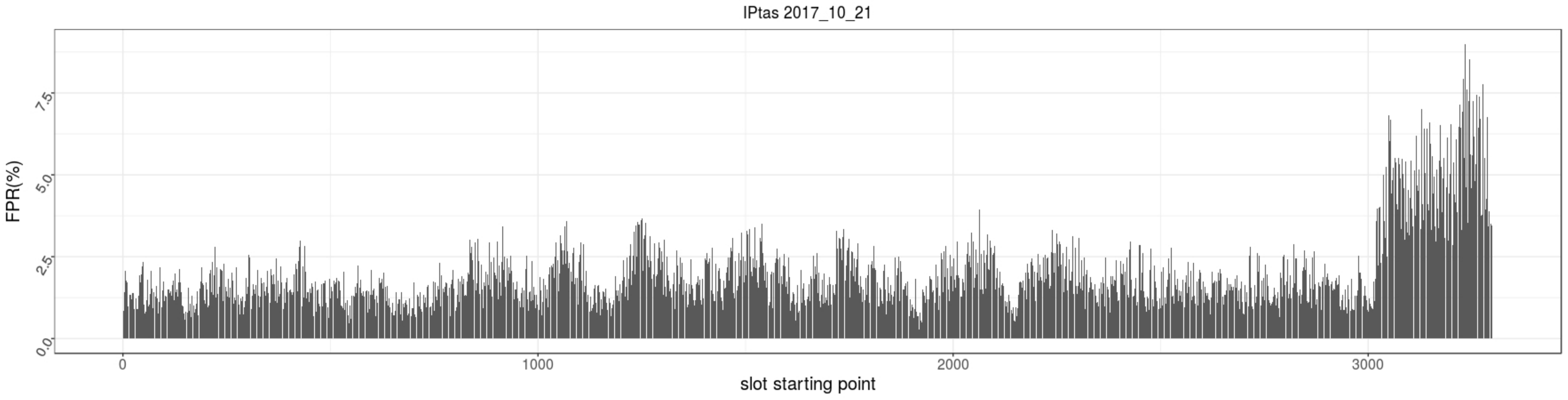}
\caption{FPR under sliding time window}
\label{SW_g16sv7_fpr_IPtas2017_10_21}
\end{figure*}

\begin{figure*}[!ht]
\centering
\includegraphics[width=0.8\textwidth]{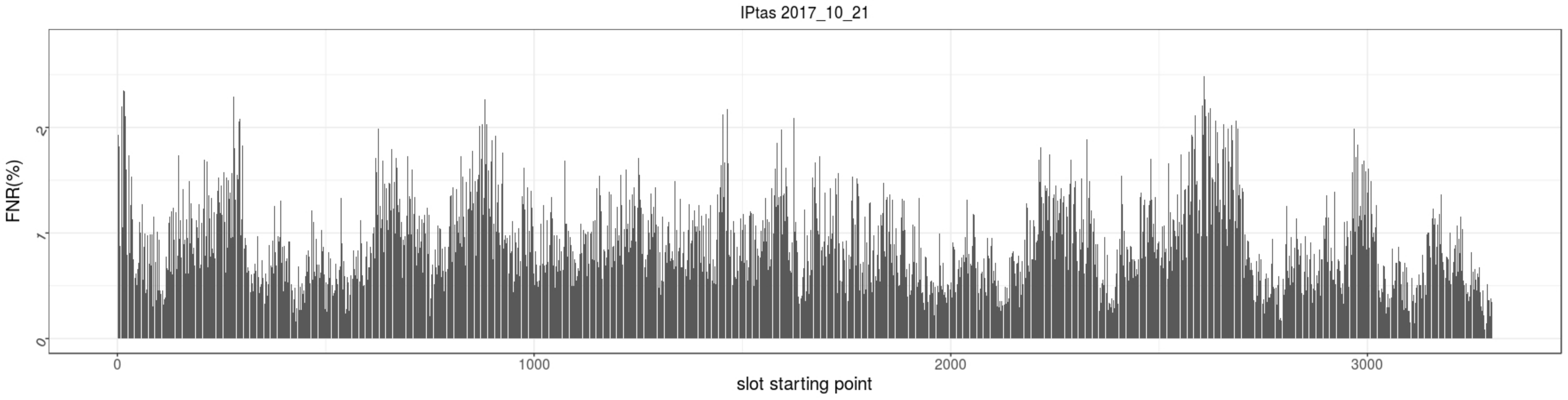}
\caption{FNR under sliding time window}
\label{SW_g16sv7_fnr_IPtas2017_10_21}
\end{figure*}

\begin{figure*}[!ht]
\centering
\includegraphics[width=0.8\textwidth]{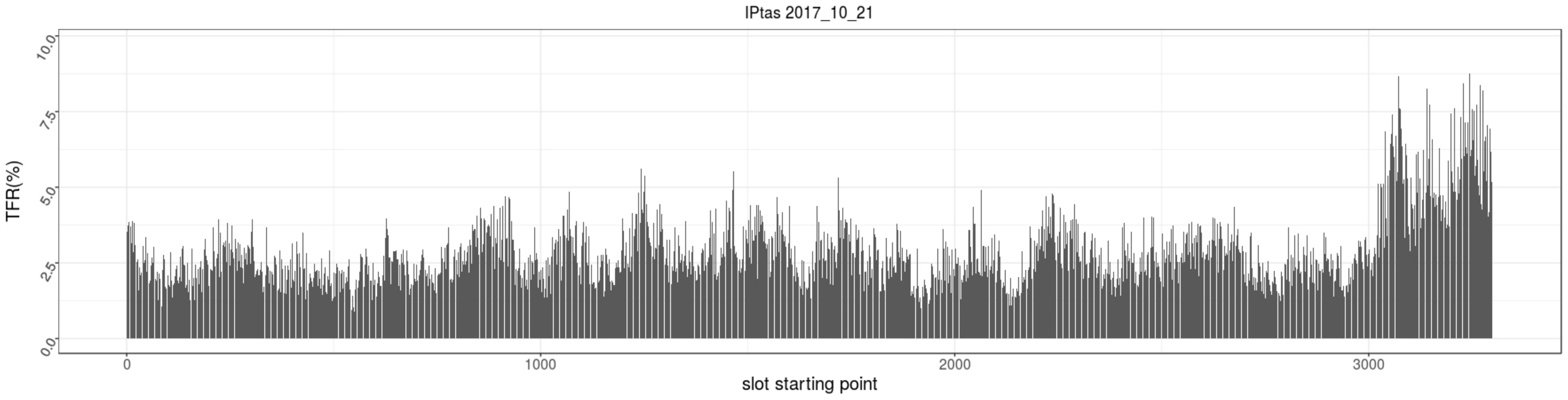}
\caption{TFR under sliding time window}
\label{SW_g16sv7_tfr_IPtas2017_10_21}
\end{figure*}

Under most sliding time window, $SRLA$ has a low FNR, smaller than $1.5\%$. When FNR is small, FPR is relative high. But the total false rate is stably small. When under sliding time window, $SRLA$ has the similar accuracy when it under discrete time window. This proves that $SRLA$ estimates super point cardinality successfully under sliding time window on GPU. In the sliding time window experiments, $SRLA$'s average $C_e$ is 100 milliseconds which is more than that under discrete time window. Because in sliding time window, $z$ is set to 16 and $SRLA$ requires more time to calculate $|SLE|^k$. But $C_e+C_u$ is still much smaller than $C_s$ and $SRLA$'s average $C_u$ is 109 milliseconds for every single slice. So $SRLA$ can detect and estimate the cardinality of super point in real time under sliding time window.

\section{Conclusion}
Super point cardinality estimation is an important and difficult task on network management. Incremental updating and small estimating time are two special difficulties in it. $SRLA$ proposed in this paper is the first one solve this problem in real time with a common GPU. $SRLA$'s capability of incremental updating comes from $DR$, a new recorder which can determine if itself is updated in a certain sliding time window. In order to reduce the super point's cardinality estimation time, $SRLA$ generating a candidate super point list while scanning IP pairs. This candidate super point list is acquired by the light weight sliding estimator $SRE$. $SRE$ is memory efficient and fast processing which makes sure that it doesn't cause many additional time for IP pairs scanning. At the end of a time slice, $SRLA$ estimates every candidate super point in the list by sliding linear estimator $SLE$. $SLE$ gives a high accuracy estimation of a host's cardinality. When running on a common GPU, $SRLA$ estimates the super point cardinality in real time for a 40Gb/s network.

\section{Reference}

 \bibliographystyle{elsarticle-num}

\bibliography{..//..//ref} 

\end{document}